\documentclass[journal]{IEEEtran}
\usepackage{pifont}
\usepackage{}
\usepackage{mathrsfs}
\usepackage[noadjust]{cite}
\usepackage{graphicx}
\usepackage{float}
\usepackage{subfigure}
\usepackage{amsthm}
\usepackage{amsmath}
\usepackage{bm}
\usepackage{enumerate}
\usepackage{amssymb}
\usepackage{fixltx2e}
\usepackage{color}
\usepackage{algorithm}
\usepackage{multirow}
\usepackage{ragged2e}
\usepackage{algpseudocode}
\usepackage{graphics}
\usepackage{epsfig}

\newtheorem{lemma}{Lemma}

\allowdisplaybreaks[4]
\setlength{\textfloatsep}{5pt}
\hyphenation{op-tical net-works semi-conduc-tor}

\begin{document}

\title{Hybrid Multicast/Unicast Design\\ in NOMA-based Vehicular Caching System with Supplementary Material}
\author{Xinyue~Pei, Hua~Yu,~\IEEEmembership{Member,~IEEE,}
		Yingyang~Chen,~\IEEEmembership{Member,~IEEE,}\\ 
		Miaowen~Wen,~\IEEEmembership{Senior Member,~IEEE,}
		and Gaojie Chen,~\IEEEmembership{Senior Member,~IEEE}
		 
		\thanks{X. Pei, M. Wen, and H. Yu are with the National Engineering Technology Research Center for Mobile Ultrasonic Detection, South China University of Technology, Guangzhou 510640, China (e-mail:
		eexypei@mail.scut.edu.cn;
		\{eemwwen, yuhua\}@scut.edu.cn).}
	\thanks{Y. Chen is with Department of Electronic Engineering, 
		College of Information Science and Technology, Jinan University, Guangzhou (e-mail:
		chenyy@jnu.edu.cn).}
	\thanks{G. Chen is with
		the School of Engineering, University of Leicester, Leicester LE1 7HB,
		U.K. (e-mail:  gaojie.chen@leicester.ac.uk).}
}
\markboth{SUBMITTED TO TRANSACTIONS ON VEHICULAR TECHNOLOGY}{}
\maketitle

\begin{abstract}
  In this paper,  we investigate a hybrid multicast/unicast scheme for a multiple-input single-output cache-aided non-orthogonal multiple access (NOMA) vehicular  
	scenario in the face of rapidly fluctuating
	vehicular wireless channels. Considering a more practical situation,  imperfect channel state information is  taking   into account. In this paper, we formulate an
	optimization problem  to maximize the unicast sum rate under the constraints of the peak power, the peak backhaul, the minimum unicast rate, and the maximum multicast outage probability. To solve the formulated non-convex problem, a lower bound relaxation method is proposed, which enables a division of the original problem into two convex sub-problems. Computer simulations show that
	the proposed caching-aided NOMA is superior to the orthogonal multiple
	access counterpart.
\end{abstract}

\begin{IEEEkeywords}
	Caching, non-orthogonal multiple access (NOMA), imperfect channel state information (CSI), vehicular communications.
\end{IEEEkeywords}
\maketitle
\section{Introduction}
Recently, multicast services have been gaining huge interest  in cellular networks [\ref{multicast}]. With the increasing demand of accessing to both multicast (e.g., proactive content pushing) and unicast services (e.g., targeted advertisements), the hybrid design of multicast and unicast services is a hot topic in the next-generation wireless communication studies [\ref{multicastnoma}]. According to the standards ratified by the  3rd generation partnership project
(3GPP), multicast and unicast services need to be divided into different time slots or frequencies [\ref{3gpp}], [\ref{refDW}]. On the other hand, non-orthogonal multiple access (NOMA) is a recognized next-generation technology, which shows superior
spectral efficiency performance compared to conventional orthogonal multiple
access (OMA)  [\ref{ratedefine}], [\ref{lldnoma}]. Unlike OMA, NOMA can distinguish users in the power domain by using successive
interference cancellation (SIC) techniques.  Compared to conventional cellular networks (e.g., LTE-multicast [\ref{3gpp}]),  NOMA-based hybrid design can realize the requirements in the power-domain.  Therefore, applying the NOMA technique to the design of a hybrid multicast/unicast system is envisioned to improve the efficiency of the system significantly [\ref{multicastnoma}]. 

The internet-of-vehicles ecosystem is another crucial technique in the future, in which vehicles need to exchange a massive amount of data with the cloud, resulting in substantial backhaul overhead [\ref{IOV}]. As a result, wireless edge caching technology is envisioned to resolve this challenge by storing contents at edge users or base stations in advance during off-peak time  [\ref{nomacaching}], [\ref{taocaching}]. To further enhance system performance for vehicular communication, NOMA is applied [\ref{17}], [\ref{NOMAvehicle}]. Therefore, it is clearly that the combination of caching, NOMA, and vehicular system is feasible and promising. Nevertheless, to the best of our knowledge, only one work [\ref{cachenomavehicular}] investigates a two-user cache-aided NOMA vehicular  network. However, the users' mobility and multiple receivers have not been taken into consideration. 

 In this context, we introduce a cache-aided NOMA vehicular  scheme for a hybrid multicast/unicast system with a backhaul-capacity constraint in the face of rapidly fluctuating
 vehicular wireless channels. Without loss of generality, we  consider one multicast user cluster and $K$  unicast users with high mobility. Additionally, we consider  the imperfect Gaussian-distributed channel state information (CSI). The main contributions of this paper are summarized below: 
\begin{itemize}
	\item  We study a generalized and practical cache-aided NOMA vehicular system, where $K$ high-speed unicast vehicular users and one multicast user cluster coexist. Moreover, we take the backhaul constraint and imperfect CSI (I-CSI) into consideration and study their impacts on the proposed schemes. 
	\item  We formulate an optimization problem for the joint
	design   in order to find the maximum sum rate of unicast users. With the aid of a proposed lower bound relaxation method, we turn the non-convex problem into a convex problem. We achieve a feasible solution by  dividing the formulated problem into two convex sub-problems.
	\item We compare the cache-aided NOMA scheme with the cache-aided OMA one. Results reveal that the NOMA scheme achieves a much higher unicast sum rate than the OMA scheme. In addition, it shows that the cache-aided system can alleviate the backhaul link.\footnote{\emph{Notation}:  $\mathcal N_c(\mu,\sigma_0^2)$ denotes complex Gaussion distribution with mean $\mu$ and variance $\sigma_0^2$. ${F}_X(\cdot)$  denotes the  cumulative distribution function (CDF)  of random variable  $X$.}
\end{itemize}



  \begin{figure}[t]
	\centering
	\includegraphics[width=3.5in]{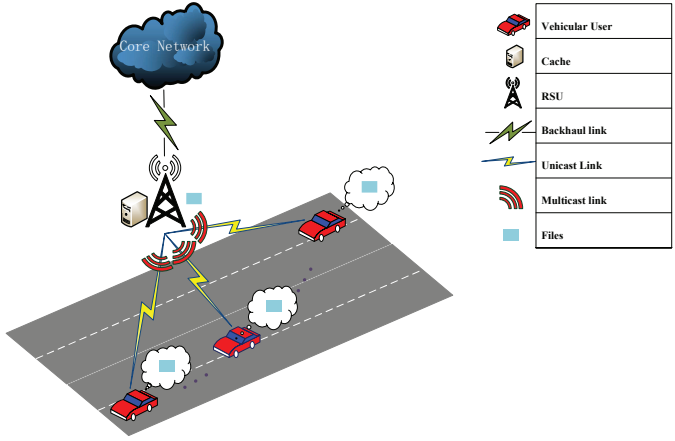}
	\caption{System model.} 	
	\label{system model}
\end{figure}
\section{System Model}
We consider a vehicular downlink single-input single-output (SISO) transmission system,  where
a roadside unit (RSU), configured with one transmit antenna, 
provides hybrid multicast and unicast services to $K$ vehicular users (denoted by $U_i$, $i\in\{1,...,K\})$, equipped with a single antenna. As shown in Fig. 1, RSU is allocated with some cache resources, and the backhaul link of RSU is assumed to be capacity-limited. For simplicity of analysis, we study the case of a single one multicast group, i.e., $\left\{U_i\right\}$, while the case of multiple multicast groups will be extended in the future work. 

\subsection{Transmission Model}
Let ${x}_M$, ${x}_i=$ ($i\in\{1,...,K\}$) be the data symbols corresponding to  multicast and unicast transmissions, respectively. All the
data symbols are assumed to have the same unit power, i.e., $\text{E}[\vert x_M\vert^2]=\text{E}[\vert x_i\vert^2]=1$.  It is assumed that
RSU uses the NOMA protocol to send the superimposed signal
to all users, which apply the SIC technique to decode the signal. To be realistic, we assume that the channel estimation processes are
imperfect [\ref{channelmodel}]. Hence, we have 
${h}_i(t)=\sqrt{1-\phi^2}\hat{{{h}}}_i(t)+\phi {\epsilon}_i(t),$
where ${h}_i$ denotes  the channel vector from RSU to $U_i$, $\hat{{{h}}}_i(t)\in\mathbb{C}^{Nt\times 1}$  denotes  the estimated channel vector between the same nodes with variance $\Omega_{i}$, and ${\epsilon}_i\in\mathbb{C}^{Nt\times 1}$ denotes the estimation error vector with variance $\Omega_{\epsilon,i}$. For convenience,  $\Omega_{\epsilon,i}$ is assumed to be a constant $\Omega_\epsilon$. All the channels are characterized by Jakes'  model [\ref{fadingchannel}] to measure users' mobility, i.e.,  $\phi=J_0(2\pi f_cv_i/c\tau),$  where  $J_0$ is the zeroth-order Bessel function of the first kind, $f_c$ denotes the carrier frequency, $v_i$ indicates the moving
velocity of $U_i$, $c$ is the light speed, and $\tau$   represents the duration between two adjacent time slots. Without loss of generality, we sort the average power of ${\rm RSU}-U_i$ links as $\vert{h}_1\vert^2\geq\cdots\geq\vert{h}_K\vert^2$. For the sake of fairness, a minimum rate limitation, namely $r_{min}$ is set. The unicast rate of each user must satisfy $r_i^U\geq r_{min}$. 

Considering any active time slot, RSU transmits a superimposed signal as
${z}=\sqrt{\beta_MP}{x}_M+\sum_{i=1}^{K}\sqrt{\beta_UP_i}{x}_i,$
where $\beta_M$ and $\beta_U$  denote the power allocation coefficients for  multicast and unicast transmissions,  respectively; $P$ and $P_i$ $(\sum_{i=1}^{K}P_i=P)$ denote the transmit power in multicast layer and for $U_i$ in unicast layer, respectively. Let ${y}_i$ be the received signal at $U_i$,
which is given as:
${y}_i=\sqrt{\beta_MP}{h}_i{x}_M+\sum_{j=1}^{K}\sqrt{\beta_UP_j}{h}_i{x}_j+n_i,$
where $n_i\sim\mathcal N_c(0,\Omega_0)$ is additive white
Gaussian noise (AWGN). Because in the downlink
system, multicast mode is more resource-efficient
than unicast mode, 
multicast messages should
have a higher priority [\ref{asurvey}]. Therefore, the multicast
messages are assumed to be decoded and subtracted before decoding
the unicast messages. Thus, the
data rate of ${x}_M$ at $U_i$ can be obtained as
\begin{align}\label{rim}
r_i^M=\log_2\left(1+\frac{\rho_M\lambda_i}{\rho_U\lambda_i+\Psi}\right),
\end{align}
where $\lambda_i=\vert\hat{{{h}}}_{i(n)}\vert^2$, $\rho_M=\beta_MP/\Omega_0$, $\rho_U=\beta_UP/\Omega_0$, $a=1/(1-\phi^2)$, $b=\phi^2/(1-\phi^2)\Omega_\epsilon$, and $\Psi=(\rho_M+\rho_U)b+a$. Obviously,  $\rho_M+\rho_U=\rho$, where $\rho=P/\Omega_0$. Similarly, the  instantaneous rate of $x_i$  observed at $U_i$ can be derived as
\begin{align}\label{riU}
r_i^U=\log_2\left(1+\rho_i\lambda_i/(\sum_{j=1}^{i-1}\rho_j\lambda_i+\sum_{j=1}^{i}\rho_jb+a)\right),
\end{align}
for $i\in\left\{1,...,K\right\}$, 
where  $\rho_i=\beta_UP_i/\Omega_0$. The detailed derivations of (\ref{rim}) and (\ref{riU}) are shown in the end of this paper.

\subsection{Cache Model}
We assume that  the ergodic rate of the backhaul link between RSU and the core
network
is subject to $R$ bit/s/Hz. Besides, we assume that RSU is equipped with a finite capacity
cache of size $N$. Let $\mathscr{F} = \{1, 2,\cdots, F\}$ denote the content of $F$ files,
each with normalized size of 1. Obviously, 
not all users can ask for their unicast messages at a time slot.  As adopted in most existing works [\ref{zipf}],  the popularity profile on $\mathscr{F}$ is 
modeled by a Zipf distribution, with a skewness
control parameter $\zeta$. Specifically, the popularity of file $f$ (denoted by $q_f$, $f\in \mathscr{F}$), is given by   $q_f={f^{-\zeta}}/{\sum_{j=1}^{F}j^{-\zeta}},$
which follows $\sum_{f=1}^{F}q_f=1$. Let $c_f$ represent the  probability that RSU caches the file $f$, satisfying $0\leq c_f\leq1$. Due to cache capacity limit at RSU, we can obtain $\sum_{f=1}^{F}c_f\leq N$.

\section{Problem Formulation}
Without loss of generality, the reception performance of multicast messages ${x}_M$ should meet the users’ quality of service (QoS) requirements, i.e., each user has a preset target rate $R_M$. As for unicast messages, they are assumed to be received opportunistically according to the
user’s channel condition [\ref{ratedefine}]. Therefore, we use the outage probabilities and instantaneous achievable rates to measure the reception performance of multicast and unicast messages, respectively.
\subsection{Outage  Probability}
Since the CDF of $\lambda_i$ is $F_{\lambda_i}(x)=1-\exp(-x/\Omega_{i})$, given the definition of the  outage
probability of ${x}_M$ at $U_i$ (denoted by $P_i^M$), namely, $P_i^M=\Pr\{r_i^M<R_M\}$, we have
\begin{align}\label{P0i}
P_i^M&=1-\exp\left(-\frac{\Psi\theta_M}{(\rho_M-\theta_M\rho_U)\Omega_i}\right),
\end{align}
where $\theta_M=2^{R_M}-1$. Obviously, $P_i^M>0$; in other words, we have $\rho_U<\rho/2^{R_M}$.\footnote{This condition ensures the quality of service of multicast signals, but may not hold when there exists interrupt.}

\subsection{Optimization Problem}
Notably, our objective is to maximize the sum rate of unicast signals, and
the optimization problem can be formulated as
\begin{subequations}
	\begin{align}
	P_0:\quad \mathop {\max\limits_{c_f,\rho_U,\rho_i}}&\quad \sum_{i=1}^{K}r_{i}^U\nonumber\\
	s.t.
	\label{4a}\quad& P_i^M<\delta,\\
	\label{4b}\quad& r_i^U\geq r_{min},\\
	\label{4c}\quad&\sum_{i=1}^{K}\rho_i=\rho_U,\\
	\label{4d}\quad&\rho_M+\rho_U=\rho,\\
	\label{4e}\quad&\sum_{f=1}^{F}\sum_{i=1}^{K}q_f(1-c_f)r_{i}^U\leq R,\\
	\label{4f}\quad&0\leq c_f\leq 1,\\
	\label{4g}\quad&\sum_{f=1}^{F}c_f\leq N,
	\end{align}
\end{subequations}
where (\ref{4a}) and (\ref{4b}) indicate the QoS requirements for the multicast and the unicast messages, respectively; (\ref{4c}) and (\ref{4d}) denote transmit power relationships for different signals;  (\ref{4e}) indicates the backhaul capacity constraint;  (\ref{4f}) indicates  the value range of cache probability; (\ref{4g}) represents the cache capacity limit at the RSU.
Without loss of generality, we have the outage requirement $\delta$ satisfying $0<\delta<1$, i.e., $\ln(1-\delta)<0$. Therefore, by substituting (\ref{P0i}) and (4d) into (4a), for  $\rho_U<\rho/2^{R_M}$, we can arrive at $\rho_U\leq {\Psi\theta_M}/{(2^{R_M}\Omega_i\ln(1-\delta))}+{\rho}/{2^{R_M}}.$ Therefore, $P_0$ can be equivalently rewritten as
\begin{subequations}\label{P1}
	\begin{align}
	P_1:\quad \mathop {\max\limits_{c_f,\rho_U,\rho_i}}&\quad \sum_{i=1}^{K}r_{i}^U\nonumber\\
	s.t.
	\label{5a}\quad& \rho_U\leq \frac{\Psi\theta_M}{2^{R_M}\Omega_1\ln(1-\delta)}+\frac{\rho}{2^{R_M}},\\
	\quad&(\ref{4b} ),(\ref{4c} ),(\ref{4e} )-(\ref{4g}).\nonumber
	\end{align}
\end{subequations}

\section{Proposed Lower Bound Relaxation Method}
Evidently, the objective function of $P_1$ is non-convex and hard to solve. Moreover, as shown in  (\ref{riU}), $(\lambda_i+b)$ in the denominator makes $r_i^U$ hardly be reformulated. Therefore, we use the \emph{lower bound relaxation} method, which can be derived as
\begin{align}\label{1}
r_i^U=\log_2\left(1+\rho_i\lambda_i/(\sum_{j=1}^{i-1}\rho_j\lambda_i+\Psi)\right).
\end{align} 
The detailed derivation of (\ref{1}) is shown in the end of this paper.
Invoking [\ref{chennoma}],
  $\rho_U$ can be split into two parts: $\rho_{min}$ for $r_{min}$ and $\triangle \rho$ for $\sum_{i=1}^{K}\triangle r_i^U$. The minimum transmit signal-to-noise ratio (SNR) and the excess transmit SNR  of $U_i$ are denoted by $\rho_{i,min}$ and $\triangle\rho_i$, respectively.\footnote{It is assumed that with $\rho_{i,min}$ ($i\in\{1,...,K\}$), $U_i$ will achieve the same data rate $r_{min}$. } Apparently, we have $\rho_{min}=\sum_{i=1}^{K}\rho_{i,min}$ and $\triangle\rho=\sum_{i=1}^{K}\triangle\rho_i$. For convenience, we use $\rho_{sum}^{min}$ to represent the sum of $\rho_{i,min}$, i.e., $\rho_{sum}^{min}=\sum_{i=1}^{K}\rho_{i,min}$.  After several mathematical steps, we can obtain \textit{\textbf{Propositions 1}} and \textit{\textbf{2}}. See Appendix A for the proofs of them.

\textit{\textbf{Proposition 1}}: With fixed $r_{min}$, we have
\begin{align}
\rho_{sum}^{min}=(2^{r_{min}}-1)\sum_{i=0}^{K-1}{2^{ir_{min}}}/{\lambda_{K-i}},
\end{align}
and 
\begin{align}
\sum_{i=1}^{K}r_i^U=Kr_{min}+\sum_{i=1}^{K}\triangle r_i^U.
\end{align}
For ease of representation, by defining
\begin{align}
\rho_i^e=(\triangle\rho_i-(2^{r_{min}}-1)\sum_{j=1}^{i-1}\triangle\rho_j)2^{(K-i)r_{min}},
\end{align}
and $n_i^e=(\Psi/\lambda_i+\sum_{j=1}^{i}\rho_{j,min})2^{(K-i)r_{min}},$
we can arrive at $\triangle r_i^U=\log_2\left(1+{\rho_i^e}/{(n_i^e+\sum_{j=1}^{i-1}\rho_j^e)}\right).$

\textit{\textbf{Proposition 2}}: The more power we allocate to the users with stronger channel conditions, the higher the sum rate is. In other words, when all the excess power is allocated to $U_1$, we have the optimal solution as
$\sum_{i=1}^{K}\triangle r_i^U=\triangle r_1^U=\log_2(1+{(\rho_U-\rho_{sum}^{min})\lambda_1}/{(\Psi 2^{Kr_{min}})})$.

Occupying the \textit{\textbf{Propositions}} above, $P_1$ can be derived as
\begin{subequations}
	\begin{align}
	P_2:\quad \mathop {\max\limits_{c_f,\rho_U,\rho_i}}&\quad Kr_{min}+\sum_{i=1}^{K}\triangle r_i^U\nonumber\\
	s.t.
	\quad&(\ref{4c}),(\ref{4f}),(\ref{4g}),(\ref{5a}),\nonumber\\
\label{8a}\quad&Kr_{min}\!\!+\!\!\sum_{i=1}^{K}\triangle r_i^U\leq \frac{R}{\sum_{f=1}^{F}q_f(1-c_f)},\\
	\quad&\triangle r_i^U=\log_2\left(1+\frac{\rho_i^e}{n_i^e+\sum_{j=1}^{i-1}\rho_j^e}\right).
	\end{align}
\end{subequations}
Obviously, $P_2$ is still hard to solve due to  (\ref{4c}), (\ref{5a}),  and (\ref{8a}). If we can fix $\rho_i$, $P_2$ will be facilitated. Therefore, our aim is to find a value of $\rho_U$ which always satisfies  (\ref{4c}) and (\ref{5a}), with any distribution of $\rho_i$.  To elaborate a little further, first, we assume to allocate all excess power to $U_1$ as shown in \textit{\textbf{Proposition 2}}. Obviously, this is the maximum value of the objective function which $\rho_i$ can achieve in various distributions and also the strictest (\ref{8a}) limitation. In this case, (\ref{8a}) can be rewritten as $Kr_{min}+\triangle r_1^U\leq {R}/{(\sum_{f=1}^{F}q_f(1-c_f))}$. Apparently, $\rho_U=0$ is a feasible point, which leads to $\rho_{min}=0$,   $\triangle\rho=0$, and $r_{min}=0$. In this case, (\ref{5a}) and (\ref{8a}) are bound to be satisfied. Consequently, we can achieve $P_3$ as
\begin{subequations}
	\begin{align}
	P_3:\quad \mathop {\max\limits_{c_f, \rho_U}}&\quad\text{obj}=Kr_{min}+\triangle r_1^U\nonumber\\
	s.t.\quad&(\ref{4f}),(\ref{4g}), (\ref{5a}),(\ref{8a}),\nonumber\\
	\label{9e}\quad&\triangle r_1^U=\log_2\left(1+\frac{(\rho_U-\rho_{sum}^{min})\lambda_1}{\Psi 2^{Kr_{min}}}\right),\\
	\label{9f}\quad&\rho_{sum}^{min}=(2^{r_{min}}-1)\sum_{i=0}^{K-1}\frac{2^{ir_{min}}}{\lambda_{K-i}}.
	\end{align}
\end{subequations}
However, $P_3$ is still non-convex. Hence, we divide it into two convex sub-problems to find its optimal solution. For given $c_f$, problem $P_3$ reduces to
\begin{subequations}
	\begin{align}
	P_4:\quad \mathop {\max\limits_{\rho_U}}&\quad \text{obj}\nonumber\\
	s.t.\quad&(\ref{5a}),(\ref{8a}),(\ref{9e}),(\ref{9f}).\nonumber
	\end{align}
\end{subequations}
For given $\rho_U$, problem $P_3$ reduces to
\begin{subequations}
	\begin{align}
	P_5:\quad \mathop {\max\limits_{c_f}}&\quad \text{obj}\nonumber\\
	s.t.\quad&(\ref{4f}),(\ref{5a}),(\ref{8a}),(\ref{9f}).\nonumber
	\end{align}
\end{subequations}
Based on $P_4$ and $P_5$, we can obtain  the \textit{lower bound} of the optimal solution of $P_3$ in {\rm\textbf{Algorithm  \ref{Whole_Algorithm}}}.
\begin{algorithm}[t]
	\caption{Algorithm for Solving The Problem } 
	\label{Whole_Algorithm} 
	\begin{algorithmic}[1] 
		\Require 
		System Parameters, $Converge$=\text{false}, iteration index $l=1$, and tolerance $\delta$.
		\Ensure System sum rate. 
		\While {$Converge$=\text{false}}  
		\State $l=l+1$;
		\State Solve $P_4$ for current $\rho_U^{(l)}$.
		\State With $\rho_U^{(l)}$, solve  $P_5$ for current $c_f^{(l)}$.
		\State Calculate $\text{obj}^{(l)}$.
		\If {$|\text{obj}^{(l)}-\text{obj}^{(l-1)}|\leq \delta$}
	    \State $Converge$=\text{ture}, $\text{obj}^{*}=\text{obj}^{(l)}$.
		\EndIf 
		\EndWhile
		\State \Return{Optimal system sum rate $\text{obj}^{*}$. }
	\end{algorithmic} 
\end{algorithm}
\begin{lemma}
	{\rm\textbf{Algorithm  \ref{Whole_Algorithm}}} guarantees convergence. 
\end{lemma}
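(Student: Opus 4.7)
The plan is to recognize Algorithm 1 as a two-block coordinate ascent scheme applied to $P_3$, with blocks $\rho_U$ and $c_f$; convergence then rests on two ingredients: monotonicity of the sequence $\{\text{obj}^{(l)}\}$ and existence of a finite upper bound. Since $P_4$ and $P_5$ were already established to be convex, each step produces a globally optimal block update, and I can reason cleanly about feasibility.

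First I would establish monotonicity. Entering iteration $l$ with a feasible pair $(\rho_U^{(l-1)}, c_f^{(l-1)})$, Step 3 freezes $c_f$ at $c_f^{(l-1)}$ and solves the convex subproblem $P_4$. Because $\rho_U^{(l-1)}$ is itself feasible for $P_4$, optimality of $\rho_U^{(l)}$ yields $\text{obj}(\rho_U^{(l)}, c_f^{(l-1)}) \geq \text{obj}^{(l-1)}$. Step 4 then freezes $\rho_U$ at $\rho_U^{(l)}$ and solves $P_5$ over $c_f$; the point $c_f^{(l-1)}$ is still feasible for $P_5$ because (\ref{4f}) and (\ref{4g}) do not involve $\rho_U$, while constraint (\ref{8a}) at $(\rho_U^{(l)}, c_f^{(l-1)})$ was just verified at the end of Step 3. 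Optimality of $c_f^{(l)}$ therefore gives $\text{obj}^{(l)} \geq \text{obj}(\rho_U^{(l)}, c_f^{(l-1)}) \geq \text{obj}^{(l-1)}$, so the sequence of objective values is non-decreasing.

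Next I would establish an upper bound. Because $\rho_U$ lies in the compact set $[0,\rho]$ from (\ref{4c})--(\ref{4d}) and $\lambda_1$, $\Psi$, $r_{min}$, $K$ are constants, the closed-form in (\ref{9e}) implies $\triangle r_1^U \leq \log_2\bigl(1 + \rho\lambda_1/(\Psi\, 2^{Kr_{min}})\bigr)$, so $\text{obj}^{(l)}$ is bounded above by a finite constant independent of $l$. Combining monotonicity with this bound, the monotone convergence theorem delivers a finite limit $\text{obj}^{*}$, and since the stopping test $|\text{obj}^{(l)} - \text{obj}^{(l-1)}| \leq \delta$ is eventually triggered by any Cauchy sequence, the while-loop terminates.

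The main subtlety I anticipate is the feasibility-preservation argument for the coupling constraint (\ref{8a}) across alternating updates; it works here precisely because (\ref{8a}) is enforced within each subproblem at the updated iterate and therefore passes automatically as a feasible point into the next subproblem. The remainder is essentially the standard monotone-bounded argument for block coordinate ascent.
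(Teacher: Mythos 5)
Your proposal is correct and follows essentially the same route as the paper, which justifies convergence by appealing to the standard behaviour of alternating optimization over the two blocks $\rho_U$ and $c_f$ (citing Boyd--Vandenberghe and compactness of the constraint set). The only difference is that you actually carry out the monotone-plus-bounded argument --- including the feasibility hand-off of the coupling constraint (\ref{8a}) between subproblems --- which the paper leaves implicit behind its citation, so your write-up is a more complete version of the same proof rather than a different one.
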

\begin{proof}
	Cauchy's theorem proves that function with compact and continuous constraint set
	always converges. Besides, solving $P_4$ and $P_5$ alternatively guarantees the convergence [\ref{convexbook}].\footnote{Since $\rho_U$ and $c_f$ are mutually decoupling, we can calculate them alternatively.} Therefore, proposed algorithm is convergent. 
\end{proof}
\begin{lemma}
	The time complexity of {\rm\textbf{Algorithm  \ref{Whole_Algorithm}}}
	is $\mathcal{O}(\frac{1}{\delta^2})$.
\end{lemma}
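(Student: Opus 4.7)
The plan is to split the time-complexity estimate into two pieces: the per-round cost of solving each convex sub-problem to tolerance $\delta$, and the number of outer alternating rounds until the stopping test $|\text{obj}^{(l)}-\text{obj}^{(l-1)}|\le\delta$ triggers. First I would observe that $P_4$ is a one-dimensional convex program in $\rho_U$ and $P_5$ is a convex program in $\{c_f\}_{f=1}^{F}$, so both fall within the scope of standard first-order convex optimization. Invoking the classical $\mathcal{O}(1/\epsilon^2)$ iteration bound for the projected subgradient method on Lipschitz convex objectives with $\epsilon=\delta$, each inner solve costs at most $\mathcal{O}(1/\delta^2)$ elementary operations.

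Next I would combine this per-round cost with the outer loop. Lemma~1 shows that $\{\text{obj}^{(l)}\}$ is monotone and bounded, and hence Cauchy, so the stopping criterion is met after finitely many rounds. Because each alternating sweep returns a block-wise maximizer up to tolerance $\delta$, the outer round count contributes only a constant in $\delta$, and the inner solver therefore dictates the $\delta$-dependence of the overall complexity. Collecting the two pieces gives the claimed $\mathcal{O}(1/\delta^2)$ bound.

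The most delicate step will be justifying that the outer loop adds only an $\mathcal{O}(1)$ factor in $\delta$: one must argue that the successive differences $|\text{obj}^{(l)}-\text{obj}^{(l-1)}|$ contract fast enough (for example, linearly once the iterates lie in a neighborhood of the optimum), otherwise the outer count could contribute an additional $1/\delta$ factor and the bound would degrade. A secondary subtlety is verifying that the Lipschitz constants of the subgradients of $\text{obj}$ with respect to $\rho_U$ and to $\{c_f\}$ remain bounded uniformly over the feasible set defined by~(\ref{4f})--(\ref{4g}), (\ref{5a}), and~(\ref{8a}), so that the constant hidden in the inner $\mathcal{O}(1/\delta^2)$ bound does not absorb further $\delta$-dependence.
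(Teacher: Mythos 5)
Your decomposition of the cost is the reverse of the paper's, and the reversal introduces a gap you flag but do not close. The paper's proof places the entire $\mathcal{O}(1/\delta^2)$ in the number of \emph{outer} alternating rounds: it invokes the standard fact that a scheme with sub-linear convergence rate needs $\mathcal{O}(1/\delta^2)$ iterations to reach accuracy $\delta$ (i.e.\ $f^{(l)}-f^*=\mathcal{O}(1/\sqrt{l})$), and implicitly treats each solve of the convex sub-problems $P_4$ and $P_5$ as unit cost, which is the usual accounting convention for block-coordinate schemes. You instead charge $\mathcal{O}(1/\delta^2)$ to each inner subgradient solve and claim the outer loop contributes only an $\mathcal{O}(1)$ factor in $\delta$. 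That last claim is exactly the step you admit is delicate, and it does not hold without further argument: from Lemma~1 you only get that $\{\text{obj}^{(l)}\}$ is monotone and bounded, which guarantees the stopping test $|\text{obj}^{(l)}-\text{obj}^{(l-1)}|\le\delta$ fires after \emph{finitely many} rounds, but "finitely many" is not uniform in $\delta$. The generic bound one can extract from monotonicity and boundedness is that at most $(\text{obj}^{*}-\text{obj}^{(1)})/\delta=\mathcal{O}(1/\delta)$ rounds can each improve the objective by more than $\delta$, so the outer count is $\mathcal{O}(1/\delta)$, not $\mathcal{O}(1)$. Combined with your inner cost this yields $\mathcal{O}(1/\delta^3)$, which is strictly worse than the claimed bound. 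The linear-contraction argument you sketch as a possible repair would require strong convexity or an error-bound condition near the optimum that neither you nor the paper establishes.

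If you adopt the paper's accounting --- count outer rounds only, treat each convex sub-problem solve as an oracle call --- then the $\mathcal{O}(1/\delta)$ outer bound above already follows from monotonicity and boundedness alone, and the paper's $\mathcal{O}(1/\delta^2)$ is the (looser, but standard) figure quoted for sub-linearly convergent first-order schemes measured against $f^{(l)}-f^*\le\delta$ rather than against the successive-difference test. Either way, the $\delta$-dependence must be located in the outer loop for the stated bound to come out; your placement of it in the inner solves cannot be salvaged without the missing contraction estimate.
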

\begin{proof}
	The complexity of sub-linear rate, e.g., $f^{(l)}-f^*\leq\delta$ is $\mathcal{O}(\frac{1}{\delta^2})$.
	Therefore, the complexity of the proposed algorithm is obtained. 
\end{proof}
\section{Numerical Results}
In this section, we discuss the performance of the proposed
cache-aided NOMA, and compare it with the  cache-aided OMA systems. The transmit power at RSU is set as $P=10$w and the backhaul capacity constraint is set as $R=5$ bit/s. We consider that RSU serves $K = 2$ and $K=3$ vehicles respectively.  For convenience, we set $(\Omega_1,\Omega_2)=(10,5)$ for the scenario where $K=2$, and  $(\Omega_1,\Omega_2,\Omega_3)=(10,5,1)$ for the scenario where $K=3$. In addition, the detailed settings of the Jakes' model are shown as follows: $v_i$ = 150 km/h, which is practical especially for a highway scenario;  $f_c=5.9$GHz; $\tau=10^{-6}$. The
noise power is set as $\Omega_0=1$w.  As for the CSI estimation errors, we set $\Omega_\epsilon=0.1$. The outage probability threshold
for multicast service is set as $\delta=0.1$.

%
\begin{figure}[t]
	\centering
	\subfigure[$K$=2]{\label{noma_rate_k_2}
		\includegraphics[width=0.5\textwidth]{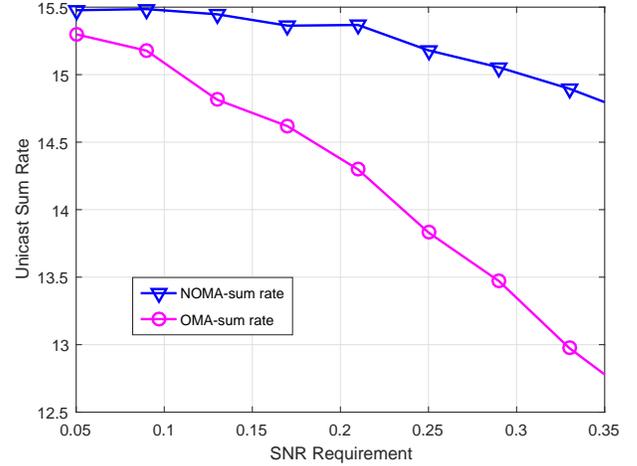}  }
	\subfigure[$K$=3]{\label{noma_rate_k_3}
		\includegraphics[width=0.5\textwidth]{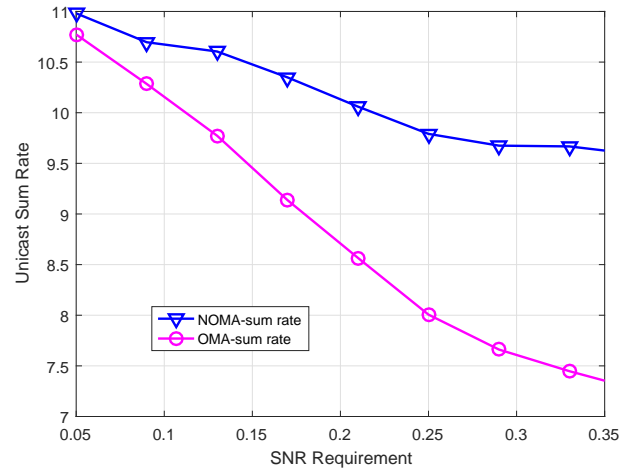} }
	\caption{Sum-rate versus minimum rate constraint with cache size $N=2$.}
	\label{NOMA_rate_constraint}
\end{figure} 

\begin{figure}[t]
	\centering
	\subfigure[$K$=2]{\label{noma_cache_k_2}
	\includegraphics[width=0.5\textwidth]{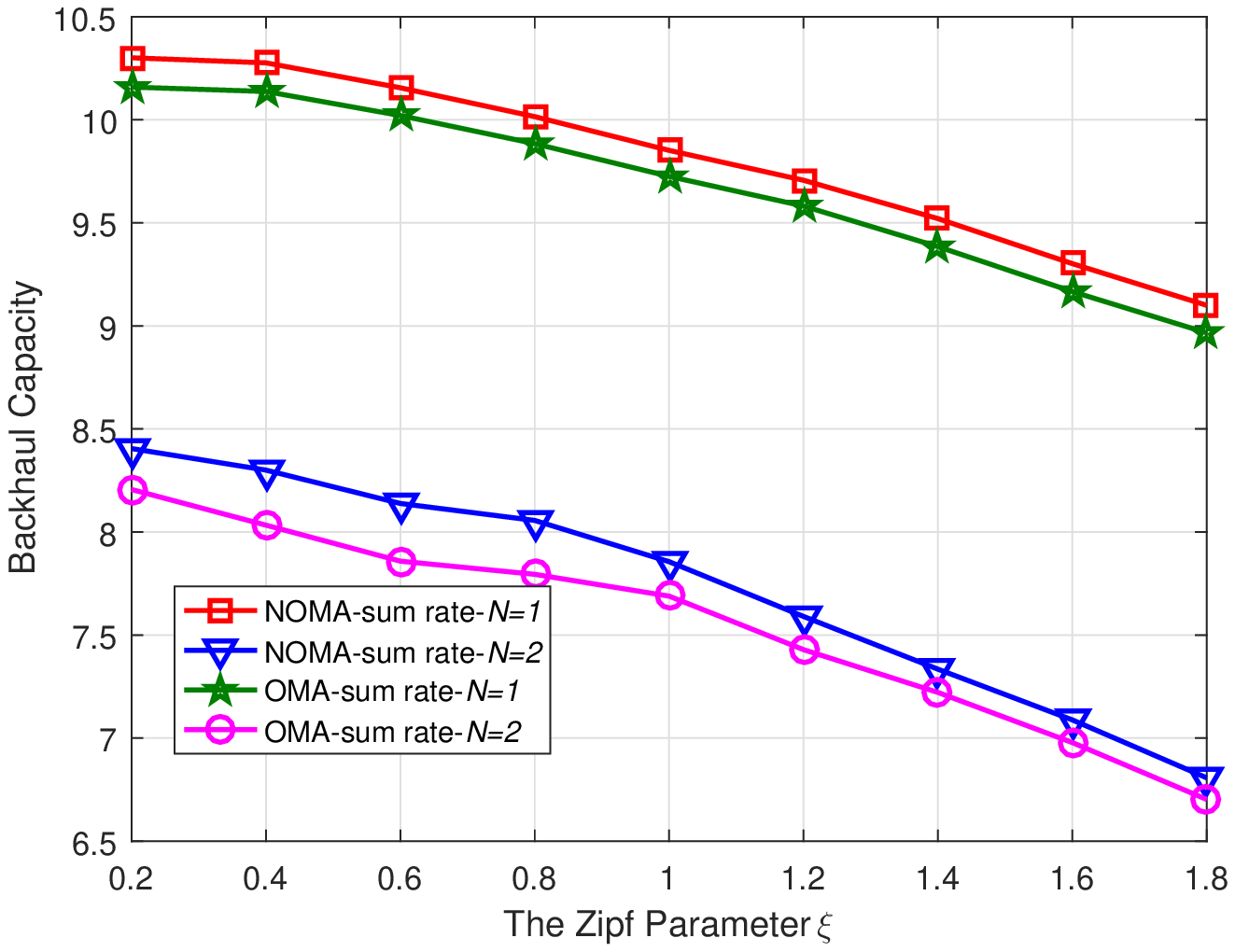} }
\subfigure[$K$=3]{\label{noma_cache_k_3}
	\includegraphics[width=0.5\textwidth]{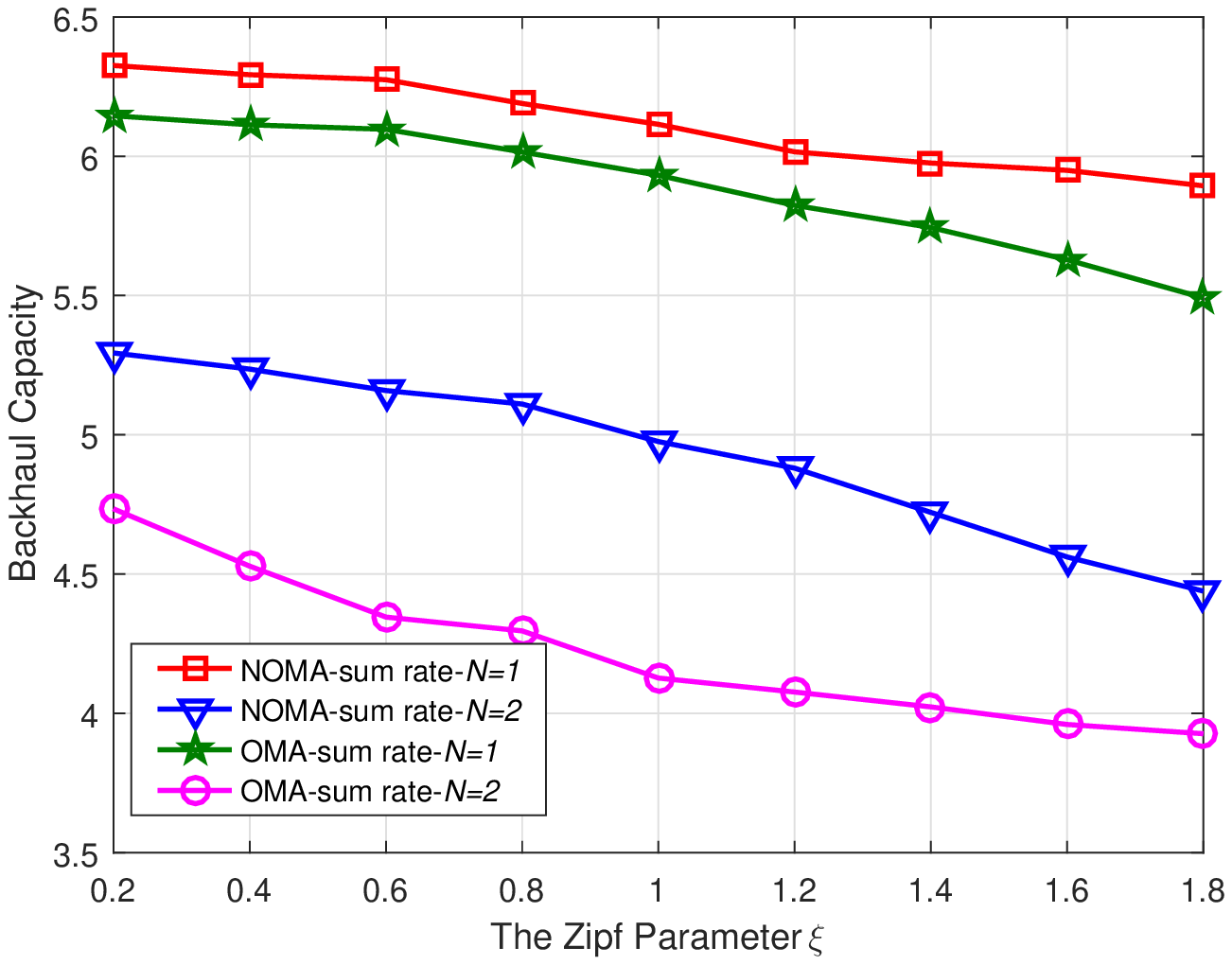} }
\caption{Backhaul capacity versus cache size with minimum rate constraint $r_{min}=0.2$.}
\label{NOMA_cache_size}
\end{figure}

In Fig. \ref{NOMA_rate_constraint}, we compare the unicast sum rate of cache-aided NOMA with that of the OMA counterpart under different minimum rate constraints. As expected, the NOMA scheme outperforms the OMA one in all cases. Obviously, the sum rates decrease  when $r_{min}$ increases, but the decrease is moderate. This is because $Kr_{min}$  is linearly increased while $\triangle r_1^U$ is exponentially decreased.  Furthermore, compare Figs. \ref{noma_rate_k_2} and \ref{noma_rate_k_3}, we can easily find that the systems with three users have lower unicast sum rate. This is because when the transmission power of the RSU is fixed, the increase of the user will also aggravate the interference, which leads to the decrease of the receiving performance, and finally affects the unicast rate.

Figure \ref{NOMA_cache_size} shows the backhaul capacity versus
the zipf parameter $\xi$ for different cache size $N$. Obviously, as $\xi$ increases, the backhaul capacity decreases, which comes from the fact that larger $\xi$ represents the more concentrated request hotspots. In other words, the probabilities that the cached files at RSU are requested by users are larger, which reduces the backhaul overhead.  Moreover, one can observe that  the  backhaul capacity of the NOMA scheme is always larger than that of the OMA one. This is because, compared to OMA, NOMA shows a superior unicast rate performance and therefore requires a relatively higher amount of backhaul resources. Besides, we can find that an increasing number of users will decrease the backhaul capacity, whose cause is the same as that of the previous figure.
\section{Conclusions}
In this paper, we have incorporated multicast and unicast
services into a cache-aided SISO vehicular NOMA system with high   mobility. We have formulated
an optimization problem to maximize the unicast sum rate  subject to the peak power, the backhaul capacity, the minimum unicast rate, and the maximum multicast outage probability constraints. The proposed non-convex problem has been appropriately solved
by the proposed lower bound relaxation method. Simulation results
have demonstrated that our proposed cache-aided NOMA scheme  outperforms the OMA counterpart.

\begin{appendices}
\setcounter{equation}{0}
\renewcommand{\theequation}{\thesection.\arabic{equation}}
\section{Proofs of \textit{\textbf{Propositions 1}} and \textit{\textbf{2}}}
Being allocated $\rho_{i,min}$, the unicast rate of $U_i$ can achieve $r_{min}$, i.e.,
\begin{align}\label{definition_of_rmin}
r_{min}=\log_2\left(1+\rho_{i,min}\lambda_i/(\sum_{j=1}^{i-1}\rho_{j,min}\lambda_i+\Psi)\right),
\end{align}
which yields
\begin{align}\label{Initial_formula}
2^{r_{min}}-1=\rho_{i,min}/(\sum_{j=1}^{i-1}\rho_{j,min}+\Psi/\lambda_i).
\end{align}
Using partition ratio theorem, (\ref{Initial_formula}) can be formulated as 
\begin{align}\label{partition_ratio_theorem}
\frac{(2^{r_{min}}-1)\sum_{j=1}^{i-1}\triangle\rho_{j}}{\sum_{j=1}^{i-1}\triangle\rho_{j}}=\frac{\rho_{i,min}}{\sum_{j=1}^{i-1}\rho_{j,min}+\Psi/\lambda_i}\nonumber\\
=\frac{\rho_{i,min}+(2^{r_{min}}-1)\sum_{j=1}^{i-1}\triangle\rho_{j}}{\sum_{j=1}^{i-1}\rho_{j,min}+\sum_{j=1}^{i-1}\triangle\rho_{j}+\Psi/\lambda_i}.
\end{align}
Substituting (\ref{partition_ratio_theorem}) into (\ref{definition_of_rmin}), we can obtain
\begin{align}
r_{min}=\log_2\left(1+\frac{\rho_{i,min}+(2^{r_{min}}-1)\sum_{j=1}^{i-1}\triangle\rho_{j}}{\sum_{j=1}^{i-1}\rho_{j}+\Psi/\lambda_i}\right).
\end{align}
Therefore, $\triangle r_i^U$ can be expressed as
\begin{align}
\triangle r_i^U&=r_i^U-r_{min}\nonumber\\
=&\log_2\left(1+\frac{\triangle\rho_{i}-(2^{r_{min}}-1)\sum_{j=1}^{i-1}\triangle\rho_{j}}{\sum_{j=1}^{i}\rho_{j,min}+\frac{\Psi}{\lambda_i}+2^{r_{min}}\sum_{j=1}^{i-1}\triangle\rho_{j}}\right)\nonumber\\
=&\log_2(1+\frac{\mathscr{P}_i}{\mathscr{N}_i+\mathscr{Q}_i}),
\end{align}
where $\mathscr{P}_i=\triangle\rho_{i}-(2^{r_{min}}-1)\sum_{j=1}^{i-1}\triangle\rho_{j}$, $\mathscr{N}_i=\sum_{j=1}^{i}\rho_{j,min}+\frac{\Psi}{\lambda_i}$, and $\mathscr{Q}_i=2^{r_{min}}\sum_{j=1}^{i-1}\triangle\rho_{j}$. Using the properties of \emph{recurrence}, we have 
\begin{align}\label{QandP}
\mathscr{Q}_i=2^{r_{min}}\sum_{j=1}^{i-1}\triangle\rho_{j}=\sum_{j=1}^{i-1}(2^{r_{min}})^{i-j}\mathscr{P}_j.
\end{align}
Let $\rho_i^e$ denote $\mathscr{P}_i(2^{r_{min}})^{K-i}$. Then, we can rewrite (\ref{QandP}) into
$
\mathscr{Q}_i(2^{r_{min}})^{K-i}=\sum_{j=1}^{i-1}(2^{r_{min}})^{K-j}\mathscr{P}_j=\sum_{j=1}^{i-1}\rho_j^e.
$
Therefore, we can derive
\begin{align}\label{finverofriu}
\triangle r_i^U=\log_2(1+\frac{\rho_{i}^e}{n_i^e+\sum_{j=1}^{i-1}\rho_j^e}),
\end{align} 
where $n_i^e=(\Psi/\lambda_i+\sum_{j=1}^{i}\rho_{j,min})2^{(K-i)r_{min}}.$  On the other hand, (\ref{Initial_formula}) can be rewritten as $\rho_{i,min}=(2^{r_{min}}-1)(\sum_{j=1}^{i-1}\rho_{j,min}+\Psi/\lambda_i).$
After the recurrence operation, we have 
\begin{align}
\rho_{i,min}=\frac{2^{r_{min}}-1}{\lambda_i}+\frac{(2^{r_{min}}-1)^22^{(i-j-1)r_{min}}}{\lambda_i},
\end{align}
which results in 
$\rho_{sum}^{min}=\sum_{i=1}^{K}\rho_{i,min}=(2^{r_{min}}-1)\sum_{i=0}^{K-1}{2^{ir_{min}}}/{\lambda_{K-i}}.$ Because $\rho_{sum}^{min}$ represents all the excess power, $\rho_{i}^e\le\rho_{sum}^{min}$. Therefore, when $i=1$, $\rho_{i}^e=\rho_{sum}^{min}$, (\ref{finverofriu}) achieves its optimal value. The proofs complete. 
\end{appendices}
\setcounter{equation}{0}
\renewcommand{\theequation}{\thesection.\arabic{equation}}
\section{Supplementary Material}
\subsection{The Detailed Derivations of (1) and (2)}
As we know, the transmit signal at RSU is
\begin{align}
	{z}=\sqrt{\beta_MP}{x}_M+\sum_{i=1}^{K}\sqrt{\beta_UP_i}{x}_i, 
\end{align}
where $\text{E}[\vert x_M\vert^2]=\text{E}[\vert x_i\vert^2]=1$.
Then the received signal at user $i$ can be derived as
\begin{align}
	{y}_i=\sqrt{\beta_MP}{h}_i{x}_M+\sum_{j=1}^{K}\sqrt{\beta_UP_j}{h}_i{x}_j+n_i,
\end{align}
which can be rewritten as 
\begin{align}
	{y}_i&=\sqrt{\beta_MP}\sqrt{1-\phi^2}\hat{{{h}}}_i{x}_M\nonumber\\
	&\overbrace{+\sqrt{\beta_MP}\phi {\epsilon}_i{x}_M
		+\sum_{j=1}^{K}\sqrt{\beta_UP_j}(\sqrt{1-\phi^2}\hat{{{h}}}_ia+\phi {\epsilon}_i){x}_j+n_i}^{interference}.
\end{align}
Without loss of generality, multicast message always has a higher priority than the unicast one. Therefore, the receiver should first decode the multicast message ($x_M$) and subtract it from $y_i$. In this way, the SINR of $x_M$ at user $i$ can be obtained by
\begin{align}\hspace*{-1cm}
	\gamma_i^M&={\beta_MP(1-\phi^2)\vert\hat{{{h}}}_i\vert^2}\nonumber\\
	&\div\left(\beta_MP\phi^2\vert{\epsilon}_i\vert^2+\sum_{j=1}^{K}\beta_UP_j(1-\phi^2)\vert\hat{{{h}}}_i\vert^2\right.\nonumber\\
	&\left.+\sum_{j=1}^{K}\beta_UP_j\phi^2\vert{\epsilon}_i\vert^2+\Omega_0\right),
\end{align}
which equals to the SINR in (1). After decoding $x_M$, user $i$ aims to obtain $x_i$ from the superposed signal 
\begin{align}
	{y}_i=\sum_{j=1}^{K}\sqrt{\beta_UP_j}{h}_i{x}_j+n_i.
\end{align}
Recall $\vert h_1\vert^2\geq\cdots\geq\vert h_i\vert^2\geq\cdots\geq\vert h_K\vert^2$, user i first decodes the data symbols for the users with weaker channels, subtract them through SIC technique, and then decoding the data symbol for itself. Consequently, we can obtain 
\begin{align}
	{y}_i&=\sqrt{\beta_UP_i}\sqrt{1-\phi^2}\hat{{{h}}}_i{x}_i\nonumber\\
	&+\overbrace{
		\sum_{j=1}^{i-1}\sqrt{\beta_UP_j}(\sqrt{1-\phi^2}\hat{{{h}}}_i+\phi {\epsilon}_i){x}_j+\sqrt{\beta_UP_i}\phi {\epsilon}_i{x}_i+n_i}^{interference},
\end{align}
and
\begin{align}\hspace*{-0.5cm}
	\gamma_i^U=\frac{{\beta_UP_i}{(1-\phi^2)}\vert\hat{{{h}}}_i\vert^2}{\sum_{j=1}^{i-1}{\beta_UP_j}{(1-\phi^2)}\vert\hat{{{h}}}_i\vert^2+\sum_{j=1}^{i}{\beta_UP_j}\phi^2\vert{\epsilon}_i\vert^2+\Omega_0}.
\end{align}
In this way, we can finally derive (1) and (2).
\subsection{The Derivation of (6)}
Recall the instantaneous rate of $x_i$ observed at $U_i$, i.e.,
\begin{align}
	r_i^U=\log_2\left(1+\rho_i\lambda_i/(\sum_{j=1}^{i-1}\rho_j\lambda_i+\overbrace{\sum_{j=1}^{i}\rho_jb+a}^{relax})\right).
\end{align}
Since the last two parts in the denominator are hardly handled, we herein use the lower bound relaxation method and replace them by a constant, i.e.,
\begin{align}
	\sum_{j=1}^{i}\rho_jb+a=\sum_{j=1}^{i}\frac{\beta_UP_i}{\Omega_0}b+a\underset{i\rightarrow K}{\overset{\beta_U\rightarrow 1}{\rightarrow}}\sum_{j=1}^{K}\rho b+a=\Psi. 
\end{align}
In this way, we can derive (6).

\end{document}